\newcommand{\Expectation}{\mathbb{E}}
\renewcommand{\Re}{\mathbb{R}}
\newcommand{\diag}{\mathtt{diag}}
\newtheorem{theorem}{Theorem}
\newtheorem{lemma}{Lemma}
\newtheorem{definition}{Definition}
\newtheorem{remark}{Remark}
\begin{document}

\title{Adaptive
Dual Covariance Steering
with Active Parameter Estimation
}

\author{Jacob W. Knaup \and Panagiotis Tsiotras
\thanks{J. Knaup is with the School of Interactive Computing, College of Computing and
the Institute for Robotics and Intelligent Machines,
Georgia Institute of Technology,  Atlanta, GA 30332--0250, USA (e-mail: jacobk@gatech.edu)}
\thanks{P. Tsiotras is with the School of Aerospace Engineering and the
 Institute for Robotics and Intelligent Machines,
Georgia Institute of Technology,  Atlanta, GA 30332--0150, USA (e-mail: tsiotras@gatech.edu)}
}

\maketitle

\begin{abstract}
This work examines the optimal covariance steering problem for systems subject to unknown parameters that enter multiplicatively with the state and control, in addition to additive disturbances. 
In contrast to existing works, the unknown parameters are modeled as random variables and are estimated online.
This work proposes the utilization of recursive least squares estimation for efficient parameter identification. 
A dual control problem is formulated in which the effect of the planned control policy on the parameter estimates is modeled and optimized for.
The parameter estimates are then used to modify the pre-computed control policy online in an adaptive control fashion.
Finally, the proposed approach is demonstrated in a vehicle control example with closed-loop parameter identification.
\end{abstract}

\section{Introduction}

Optimal covariance steering deals with the problem of steering a stochastic system from a given initial mean and covariance to a terminal mean and covariance while minimizing a given performance criterion~\cite{goldshtein2017finite, chen2015optimal1, chen2015optimal2, chen2018optimal}. 
Covariance steering has been successfully applied to a variety of practical problems including path planning~\cite{okamoto2019optimal, zheng2022belief}; differential games~\cite{chen2019covariance}; entry, descent, and landing for space systems~\cite{ridderhof2018uncertainty, goyal2021optimal}; and orbital operations~\cite{ridderhof2022chance, benedikter2022convex, benedikter2022covariance}.
While originally the problem was formulated for Gaussian distributions---in which case the initial and terminal distributions were fully defined by the initial and terminal means and covariances---recent results have moved towards more general distributions with finite moments~\cite{liu2022generic, ridderhof2019nonlinear, chen2021covariance}.
Moreover, in addition to additive disturbances, recent works have also considered systems subject to multiplicative disturbances~\cite{liu2022optimal, knaup2023computationally, balci2023covariance} and unknown parameters~\cite{knaup2023covariance}.

While control policies are typically designed for idealized models, real physical systems, such as the vehicles and spacecraft considered in~\cite{ridderhof2018uncertainty, goyal2021optimal, ridderhof2022chance, benedikter2022convex, benedikter2022covariance, knaup2023computationally, knaup2023covariance}, depend on a variety of physical parameters that can be difficult to determine exactly.
Typically, the parameter identification is performed offline and then the nominal identified parameter is taken as the ground truth for the purposes of control design.
Although adaptive control extends this paradigm by performing the identification online, these methods, nonetheless, still only use the nominal estimate for the purposes of control design~\cite{soloperto2019dual, filatov2000survey, mesbah2018stochastic}.
This, however, can lead to over-confidence in the control policy and poor performance if the parameter estimate is inaccurate~\cite{unbehauen2000adaptive}.
In~\cite{knaup2023covariance}, we proposed an alternative paradigm in which the estimated parameter \emph{distribution} is used for the purposes of control design.
This method accounts for lingering uncertainty in the true value of the parameter after estimation, and designs a control policy that is robust to this additional source of uncertainty. 
This method also extends to the case in which parameters cannot be estimated a-priori but are known to lie in a given set.

An alternative method is to identify the parameters online in closed-loop. 
This has the advantage of enabling continuous, lifelong learning of the system's parameters and is also applicable to systems for which generating suitable test signals for parameter identification is difficult (e.g., because the system must obey safety constraints while the parameters are being identified)~\cite{unbehauen2000adaptive}.
This gives rise to the question of dual control, which addresses how to optimally control the system while performing identification to balance the performance of the system's operation and the accuracy of the resulting parameter estimates~\cite{unbehauen2000adaptive, filatov2000survey}. 
In general, there is a trade-off between these objectives as, typically, introducing increased excitation leads to better parameter estimates but at the expense of the performance and safety of the system~\cite{bar1974dual, mesbah2018stochastic}.

This work builds on our recent results in~\cite{knaup2023covariance} dealing with covariance steering for systems subject to unknown parameters and additive disturbances. 
The present work examines linear time-varying systems subject to additive disturbances. 
Additionally, we consider that these systems are subject to unknown parameters, modeled as random variables drawn from a known distribution which may enter the system multiplicatively with the state and control as well as additively. 
In contrast to~\cite{knaup2023covariance}, which designs a fixed optimal control policy for the given \emph{distribution} of parameters, the present work proposes to estimate the parameters online and design an optimal control policy which \emph{adapts} to the parameter \emph{estimates}.

Specifically, we formulate an optimal dual control problem which models the effect of the control policy on the uncertainty of the parameter estimates, and designs an optimal feedback policy which adapts to the parameter estimates as they converge.
Although the feedback policy is designed offline, the adaptation takes place online as the data is collected; thus, feedback is applied both with respect to the state measurement and the parameter estimate.
The parameter estimation is performed efficiently using recursive least squares estimation which avoids the need for solving an optimization problem online.
Therefore, the resulting policy may be efficiently implemented on embedded systems with extremely limited computational resources.
This work represents the first introduction of online learning and parameter estimation into the covariance steering literature and, in particular, presents the first dual control formulation of the covariance steering problem.

\section{Problem Formulation}

Consider an affine time-and-parameter-varying system given by 
\begin{subequations}
    \begin{align} \label{eq:lpv_sys}
        x_{k+1} &= A_{k}(p) x_k + B_{k}(p) u_k + D_{k} w_k + r_{k}(p),
    \end{align}
    where,
    \begin{align}
        A_{k}(p) = \sum_{j=1}^{n_p} A_{j, k} p^{j}, &\quad
        B_{k}(p) = \sum_{j=1}^{n_p} B_{j, k} p^{j}, \\
        r_{k}(p) &= \sum_{j=1}^{n_p} r_{j, k} p^{j},
    \end{align}
\end{subequations}
and where $p = [p^{1}, \dots, p^{n_p}]^\top \in \Re^{n_p}$, $x_k \in \Re^{n_x}$, and $u_k \in \Re^{n_u}$.
Let the parameter $p \sim \mathcal{P}$, such that $\Expectation[p] = \bar{p}$ and $\Expectation[(p - \bar{p})(p - \bar{p})^\top] = P \succeq \bm{0}_{n_p \times n_p}$, 
and let $w_k \sim \mathcal{W}$ such that $\Expectation[w_k] = \mathbf{0}_{n_w}$, $\Expectation[w_k w_k^\top] = I_{n_w}$, and $\Expectation[w_{k_1} w_{k_2}^\top] = 0_{n_w \times n_w}$ for all $k_1 \neq k_2$. 
It is assumed that $A_{j, k}$, $B_{j, k}$, $D_{k}$, and $r_{j, k}$ are known for all $j = 1, \dots, n_p$ and $k = 0, \dots, N-1$.
The system dynamics are therefore affine in the parameters, and thus can also be written as
\begin{subequations}
    \begin{align} 
        x_{k+1} &= \Gamma_{k}(x_k, u_k)p + D_{k} w_k,
    \end{align}
    where,
    \begin{align}
        \Gamma_{k}(x, u) &= [A_{1, k} x_k + B_{1, k} u_k + r_{1, k}, \dots, \nonumber\\
        &\qquad A_{n_p, k} x_k + B_{n_p, k} u_k + r_{n_p, k}].
    \end{align}
\end{subequations}

Let the initial conditions be given by $x_0 \sim \mathcal{X}_0$, where $\Expectation[x_0] = \mu_I$ and $\Expectation[(x_0 - \mu_I)(x_0 - \mu_I)^\top] = \Sigma_I$, for $\mu_I \in \Re^{n_x}$ and $\Sigma_I \succeq \bm{0}_{n_x \times n_x}$.
We wish to steer \eqref{eq:lpv_sys} to a given final mean $\mu_F \in \Re^{n_x}$ at time $N$ and regulate the terminal covariance with respect to $\Sigma_F \succ \bm{0}_{n_x \times n_x}$, such that 
\begin{align*}
    \Expectation[x_N] = \mu_F, \quad \Expectation[(x_N - \Expectation[x_N])(x_N - \Expectation[x_N])^\top] \preceq \Sigma_F,
\end{align*}
while minimizing the cost function
\begin{subequations}
\begin{align}\label{eq:cost_function}
    J(\mu_I, \Sigma_I; u_0, \dots, u_{N-1}) &= \Expectation\left[\sum_{k=0}^{N-1} \ell_{k}(x_k, u_k) \right],
\end{align}
where,
\begin{align}
    \ell_k(x_k, u_k) &= x_k^\top Q_k x_k + u_k^\top R_k u_k,
\end{align}
and $Q_k \succeq \bm{0}_{n_x \times n_x}$, $R_k \succ \bm{0}_{n_u \times n_u}$.
\end{subequations}

Thus, the problem may be summarized by designing an $N$-step control policy $\bm{\pi}^{N} = \{\pi_0, \dots, \pi_{N-1} \}$ that solves the problem
\begin{subequations} \label{prob:fi_cs}
    \begin{align}
        & \min_{\bm{\pi}^{N}} J_N(\bm{\pi}^{N}) = \Expectation\left[\sum_{k=0}^{N-1} \ell_{k}(x_k, u_k) \right], \\
        &\text{subject to} \nonumber\\
        & x_0 \sim \mathcal{X}_{0}, \quad w_k \sim \mathcal{W}, \\
        & x_{k+1} = A_{k}(p_\mathrm{gt}) x_k + B_{k}(p_\mathrm{gt}) u_k + D_{k} w_k + r_{k}(p_\mathrm{gt}), \label{const:dynamics} \\
        & u_k = \pi_k(x_k), \\
        & \Expectation[x_N] = \mu_F, \quad \Expectation[(x_N - \mu_F)(x_N - \mu_F)^\top] \preceq \Sigma_F,
    \end{align}
\end{subequations}
for $k = 0, 1, \dots, N-1$, and where the dynamics \eqref{const:dynamics} evolve according to the particular, unknown realization of $p_\mathrm{gt} \sim \mathcal{P}$.

\section{Covariance Steering Controller Design}

\subsection{Existing Methods}

While Problem~\eqref{prob:fi_cs} represents the ideal problem we would like to solve using the true parameter value, it is not tractable as the realization of $p_{gt} \sim \mathcal{P}$ is not known ahead of time, nor can it be observed directly. 
One alternative is to solve the \textit{certainty equivalence} problem given by
\begin{subequations} \label{prob:ce_cs}
    \begin{align}
        & \min_{\bm{\pi}^{N}} J_N(\bm{\pi}^{N}) = \Expectation\left[\sum_{k=0}^{N-1} \ell_k(x_k, u_k) \right], \\
        &\text{subject to} \nonumber\\
        & x_0 \sim \mathcal{X}_{0}, \quad w_k \sim \mathcal{W}, \\
        & x_{k+1} = A_k(\bar{p}) x_k + B_k(\bar{p}) u_k + D_k w_k + r_k(\bar{p}), \\
        & u_k = \pi_k(x_k), \\
        & \Expectation[x_N] = \mu_F, \quad \Expectation[(x_N - \mu_F)(x_N - \mu_F)^\top] \preceq \Sigma_F,
    \end{align}
\end{subequations}
for $k = 0, 1, \dots, N-1$, and where the expected value of $p$, given by $\bar{p}$, is used in place of the true value, $p_{\mathrm{gt}}$.
This is the case, for example, when system identification is performed offline and then the nominal estimated parameter values are taken as the true values for the purpose of control design.
However, this approach does not consider the uncertainty in the parameters and will fail to meet the terminal constraints in practice if $p_{\mathrm{gt}} \neq \bar{p}$.

Another alternative, similar to the approach used in \cite{knaup2023covariance}, is to solve the stochastic problem given by 
\begin{subequations} \label{prob:r_cs}
    \begin{align}
        & \min_{\bm{\pi}^{N}} J_N(\bm{\pi}^{N}) = \Expectation\left[\sum_{k=0}^{N-1} \ell_k(x_k, u_k) \right], \\
        &\text{subject to} \nonumber\\
        & x_0 \sim \mathcal{X}_{0}, \quad w_k \sim \mathcal{W}, \quad p \sim \mathcal{P}, \\
        & x_{k+1} = A_k({p}) x_k + B_k({p}) u_k + D_k w_k + r_k({p}), \\
        & u_k = \pi_k(x_k), \\
        & \Expectation[x_N] = \mu_F, \quad \Expectation[(x_N - \mu_F)(x_N - \mu_F)^\top] \preceq \Sigma_F,
    \end{align}
\end{subequations}
for $k = 0, 1, \dots, N-1$,
and where the known distribution of $p$, given by $\mathcal{P}$, is used to propagate the uncertainty in the state dynamics.
In contrast to Problem~\eqref{prob:ce_cs}, the expectations in Problem~\eqref{prob:r_cs} are also taken over $p \sim \mathcal{P}$ in order to account for the additional source of uncertainty.
This approach is robust to the uncertainty and will meet the terminal constraints in practice. 
However, this approach is overly conservative, as it must design a single state-feedback control policy which statistically performs well for all possible parameter realizations.

A more optimal policy would \emph{adapt} to the parameter realization to approximate the solution to \eqref{prob:fi_cs}. 
To this end, introduce the control policy
\begin{align} \label{eq:adaptive_control_policy}
    u_k &= \rho_k(x_k, p),
\end{align}
where $\rho_k(\cdot, \cdot) : \Re^{n_x} \times \Re^{n_p} \rightarrow \Re^{n_u}$ for all $k = 0, 1, \dots, N-1$. 
Such a control policy is referred to as \emph{adaptive}, which we define below.
\begin{definition}[Adaptive Control \cite{unbehauen2000adaptive, seborg1986adaptive, isermann1982parameter}]\label{def:adaptive}
    An adaptive control system is one that automatically changes the parameters of the control policy in order to improve performance by adjusting its behavior to the properties of the system to be controlled.

\end{definition}
%
%
However, policy \eqref{eq:adaptive_control_policy} is inadmissible because the control policy depends on the unobservable parameter realization $p$. 
Instead, we propose a control policy which adapts to the \emph{estimated} parameter. 

\subsection{Proposed Approach}

We propose to estimate the unknown parameters using the method of weighted least squares given by
    \begin{align} \label{prob:wbls}
        &\min_{{p}\,\in\,\Re^{n_p}} ~ \gamma^{k} ({p} - \bar{p})^\top P^{-1} ({p} - \bar{p}) \\
        &+ \sum_{t=0}^{k-1} \gamma^{k-t-1} (x_{t+1} - \Gamma_{k}(x_t, u_t){p})^{\top} (x_{t+1} - \Gamma_{k}(x_t, u_t){p}), \nonumber 
    \end{align}
for $k = 0, 1, \dots, N-1$, and where $P$ is the covariance of $\mathcal{P}$ and $\gamma \in (0, 1]$.
If $\gamma < 1$, then $\gamma$ acts as an exponentially decaying weight and reduces the influence of the prior and older measurements over time, ensuring the estimated parameter fits the most recent data.
The case of $\gamma = 1$ is also permitted, and in this case, all observations and the prior are weighted equally.
The inclusion of the prior, which is constructed using the first and second moments of the parameter distribution, ensures that Problem~\eqref{prob:wbls} has a unique solution regardless of the available data $x_0, \dots, x_k$, and $u_0, \dots, u_{k-1}$.
The optimal parameter estimate which solves \eqref{prob:wbls} at time $k$ is given by $\hat{p}_k \in \Re^{n_p}$ and depends on the observations available up to and including time $k$ as well as on the first and second moments of the parameter distribution.

We then employ the adaptive control policy \eqref{eq:adaptive_control_policy}, and use the estimated parameter $\hat{p}_k$ in place of the true parameter $p$. 
Thus, the control is given by 
\begin{align}\label{eq:dual_control_policy}
    u_k = \rho_k(x_k, \hat{p}_k),
\end{align}
which clearly possesses the adaptive property given in Definition~\ref{def:adaptive}.
Additionally, we wish to design the policy \eqref{eq:dual_control_policy} in such a way that it also possesses the \emph{dual} control properties given in Definition~\ref{def:dual} below,
\begin{definition}[Adaptive Dual Control \cite{unbehauen2000adaptive, filatov2000survey, bar1974dual, mesbah2018stochastic}] \label{def:dual}
    A dual control system is one that operates under model uncertainty and incorporates this uncertainty into the control strategy such that the control signal has the dual properties:
    \begin{enumerate}[(i)]
        \item The system tracks the desired reference value and obeys the constraints in the presence of the model uncertainty.
        \item The system is excited in order to improve the estimation (by reducing the higher order moments of the parameter error) so that the quality of the adaptive control policy using the parameter estimate can be improved considerably in future time intervals.
    \end{enumerate}



\end{definition}

We are now ready to introduce the adaptive dual covariance steering problem, which is given as
\begin{subequations} \label{prob:d_cs}
    \begin{align}
        & \min_{\bm{\rho}^{N}} J_N(\bm{\rho}^{N}) = \Expectation\left[\sum_{k=0}^{N-1} \ell_k(x_k, u_k) \right], \\
        &\text{subject to} \nonumber\\
        & x_0 \sim \mathcal{X}_{0}, \quad w_k \sim \mathcal{W}, \quad p \sim \mathcal{P}, \\
        & x_{k+1} = A_k({p}) x_k + B_k({p}) u_k + D_k w_k + r_k(p), \\
        & u_k = \rho_k(x_k, \hat{p}_k), \label{const:est_fb}\\
        & \hat{p}_k = \mathrm{argmin}_{{p}} ~ \gamma^{k} ({p} - \bar{p})^\top P^{-1} ({p} - \bar{p}) \label{const:lse}\\
        &+ \sum_{t=0}^{k-1} \gamma^{k-t-1} (x_{t+1} - \Gamma_{k}(x_t, u_t){p})^{\top} (x_{t+1} - \Gamma_{k}(x_t, u_t){p}), \nonumber\\
        & \Expectation[x_N] = \mu_F, \quad \Expectation[(x_N - \mu_F)(x_N - \mu_F)^\top] \preceq \Sigma_F,
    \end{align}
\end{subequations}
for $k = 0, 1, \dots, N-1$, and where $\bm{\rho}^N = \{\rho_0(\cdot, \cdot), \dots, \rho_{N-1}(\cdot, \cdot) \}$, ${p} \in \Re^{n_p}$, and $\gamma \in (0, 1]$. 

Most adaptive control policies are based on the separation of parameter estimation and controller design and only identify a model passively as a byproduct.
In such cases, the control law is designed using the estimated parameters as an exact representation for the system in a certainty equivalence fashion, without accounting for the uncertainty of estimation \cite{soloperto2019dual, filatov2000survey}.
Problem~\eqref{prob:d_cs}, on the other hand, ensures the constraints are satisfied statistically for any parameter realization $p \sim \mathcal{P}$, regardless of the estimate.
In contrast to Problem~\eqref{prob:r_cs} though, Problem~\eqref{prob:d_cs} also accounts for the online estimation of the parameter and adapts the control policy to the parameter estimate  $\hat{p}_k$.
Thus, the solution is an adaptive \emph{dual} control policy, as per Definition~\ref{def:dual}.
%

\subsection{Tractable Formulation}

The solution to Problem~\eqref{prob:d_cs} depends on optimizing over the space of $N$ arbitrary (infinite-dimensional) control policies which each depend on the estimated parameters at that time step. 
The estimated parameters, in-turn, depend on the solution of $N$ quadratic programs. 
We address these computational issues by first introducing an analytical, recursive solution to the quadratic program~\eqref{prob:wbls}. 
Then, we introduce a finite-dimensional approximation of $\rho_k(\cdot, \cdot)$.

%
The solution to Problem~\eqref{prob:wbls} requires storing the entire history of the observations, which grows with $k$, making the approach impractical for large time horizons.
To resolve this issue, we utilize recursive least squares, given by the following lemma.
\begin{lemma}[Recursive Least Squares \cite{islam2019recursive}] \label{lem:rls}
    The solution to Problem~\eqref{prob:wbls} is given by
    \begin{subequations}\label{eq:rls}
        \begin{align}
            &\hspace{-1mm} \hat{p}_{k+1} = \hat{p}_k + P_{k+1} \Gamma_{k}(x_k, u_k)^{\top} (x_{k+1} - \Gamma_{k}(x_k, u_k) \hat{p}_{k}), \\
            &\hspace{-1mm} P_{k+1} = \frac{1}{\gamma} P_{k} - \frac{1}{\gamma}P_{k} \Gamma_{k}(x_k, u_k)^{\top} \nonumber\\
            &\;\; (\gamma I + \Gamma_{k}(x_k, u_k) P_{k} \Gamma_{k}(x_k, u_k)^{\top})^{-1} \Gamma_{k}(x_k, u_k) P_{k},
        \end{align}
    \end{subequations}
    for $k = 0, 1, \dots, N-1$, where $\hat{p}_0 = \bar{p}$ and $P_0 = P$.
\end{lemma}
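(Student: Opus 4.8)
The plan is to recognize Problem~\eqref{prob:wbls} as a strictly convex, regularized weighted least-squares problem whose unique minimizer solves a linear normal equation, to show that the underlying information matrix and information vector each obey a one-line recursion, and then to convert those recursions into the stated updates via the Sherman--Morrison--Woodbury identity together with a back-substitution.

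Concretely, abbreviate $\Gamma_t := \Gamma_t(x_t,u_t)$ and introduce the information matrix and information vector
\begin{align*}
    M_k &:= \gamma^k P^{-1} + \sum_{t=0}^{k-1}\gamma^{k-t-1}\Gamma_t^\top \Gamma_t, \\
    b_k &:= \gamma^k P^{-1}\bar{p} + \sum_{t=0}^{k-1}\gamma^{k-t-1}\Gamma_t^\top x_{t+1}.
\end{align*}
Since $P^{-1}\succ 0$ and $\gamma^k>0$, the objective in \eqref{prob:wbls} is a strictly convex quadratic in $p$ with Hessian $2M_k$, so setting the gradient to zero gives the normal equation $M_k\hat{p}_k = b_k$, and $M_k\succ 0$ makes $\hat{p}_k = M_k^{-1}b_k$ the unique minimizer (this is also the uniqueness claim asserted right after \eqref{prob:wbls}). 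Writing $P_k := M_k^{-1}$, one reads off $M_0 = P^{-1}$ and $b_0 = P^{-1}\bar{p}$, hence $P_0 = P$ and $\hat{p}_0 = \bar{p}$, which matches the initialization in the lemma.

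Isolating the $t=k$ summand then gives $M_{k+1} = \gamma M_k + \Gamma_k^\top\Gamma_k$ and $b_{k+1} = \gamma b_k + \Gamma_k^\top x_{k+1}$, i.e., $P_{k+1}^{-1} = \gamma P_k^{-1} + \Gamma_k^\top\Gamma_k$. Applying the Woodbury identity $(A+UCV)^{-1} = A^{-1} - A^{-1}U(C^{-1}+VA^{-1}U)^{-1}VA^{-1}$ with $A = \gamma P_k^{-1}$, $U = \Gamma_k^\top$, $C = I$, $V = \Gamma_k$, and using $(I+\gamma^{-1}\Gamma_k P_k\Gamma_k^\top)^{-1} = \gamma(\gamma I + \Gamma_k P_k\Gamma_k^\top)^{-1}$, collapses the result to the stated covariance update in \eqref{eq:rls}. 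For the mean, substitute $b_k = P_k^{-1}\hat{p}_k$ and $\gamma P_k^{-1} = P_{k+1}^{-1} - \Gamma_k^\top\Gamma_k$ into $b_{k+1} = \gamma b_k + \Gamma_k^\top x_{k+1}$ to obtain $b_{k+1} = P_{k+1}^{-1}\hat{p}_k + \Gamma_k^\top(x_{k+1} - \Gamma_k\hat{p}_k)$; left-multiplying by $P_{k+1} = M_{k+1}^{-1}$ yields the stated update for $\hat{p}_{k+1}$.

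No genuine difficulty arises here — this is the classical recursive-least-squares derivation — so the only care needed is bookkeeping: tracking the forgetting-factor powers $\gamma^{k-t-1}$ correctly when peeling off the newest term so that a clean factor of $\gamma$ can be extracted, and carrying that $\gamma$ through the Woodbury step so that the innovation gain appears with $(\gamma I + \Gamma_k P_k\Gamma_k^\top)^{-1}$ rather than $(I + \gamma^{-1}\Gamma_k P_k\Gamma_k^\top)^{-1}$. It is also worth noting explicitly that $P^{-1}\succ 0$ is precisely what keeps $M_k$ invertible for every $k$ irrespective of the realized data $x_0,\dots,x_k$ and $u_0,\dots,u_{k-1}$, which is what makes the recursion in \eqref{eq:rls} well posed.
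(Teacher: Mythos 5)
Your derivation is correct: the information-form recursions $M_{k+1}=\gamma M_k+\Gamma_k^\top\Gamma_k$ and $b_{k+1}=\gamma b_k+\Gamma_k^\top x_{k+1}$ follow from peeling off the $t=k$ term with the weighting $\gamma^{k-t-1}$ used in \eqref{prob:wbls}, the Woodbury step with $A=\gamma P_k^{-1}$ produces exactly the $(\gamma I+\Gamma_k P_k\Gamma_k^\top)^{-1}$ gain of \eqref{eq:rls}, the back-substitution yields the stated mean update with $P_{k+1}\Gamma_k^\top$ as the gain, and the empty sum at $k=0$ gives the initialization $\hat{p}_0=\bar{p}$, $P_0=P$. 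The difference from the paper is one of route rather than substance: the paper proves the lemma by direct appeal to Theorem~1 of the cited recursive-least-squares reference, whereas you re-derive the classical result from first principles via the normal equations and the matrix inversion lemma. Your version is self-contained and has the side benefit of verifying that the specific weighting convention in \eqref{prob:wbls} (prior weight $\gamma^k$, newest residual weight $1$) and the implicit requirement $P\succ\bm{0}_{n_p\times n_p}$ (needed for $P^{-1}$ and for uniqueness of the minimizer for arbitrary data) are exactly what make the stated recursion and its initialization come out right, including the equivalence of your gain $P_{k+1}\Gamma_k^\top$ with the more common form $P_k\Gamma_k^\top(\gamma I+\Gamma_k P_k\Gamma_k^\top)^{-1}$; the paper's citation-based proof is shorter but leaves that bookkeeping to the reader.
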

\begin{proof}
    The result follows immediately from Theorem~1~of~\cite{islam2019recursive}.
\end{proof}
Lemma~\ref{lem:rls} allows for the parameter to be updated using only the two most recent data points. 
Thus, the sizes of the matrices and vectors in \eqref{eq:rls} are fixed and do not depend on $k$.
For more details on recursive least squares, see, for example, \cite{liu2013convergence, isermann2011identification}.

Having rendered the parameter estimation tractable, we turn to the control policy parameterization.
Since optimizing over arbitrary feedback policies amounts to an intractable infinite dimensional optimization problem, in practice, prior works, such as \cite{goldshtein2017finite, rapakoulias2023discrete, knaup2023covariance}, introduce to Problems~\eqref{prob:fi_cs},~\eqref{prob:ce_cs},~\eqref{prob:r_cs} the affine state-feedback policy parameterization given by 
\begin{align} \label{eq:affine_state_feedback}
    \pi_k(x_k) &= v_k + L_k x_k,
\end{align}
where $v_k \in \Re^{n_u}$ and $L_k \in \Re^{n_u \times n_x}$ for $k = 0, 1, \dots, N-1$.
The affine state feedback parameterization given by \eqref{eq:affine_state_feedback} is reasonable due to the following lemma.

\begin{lemma} [\!\!\cite{liu2022optimal}] \label{lem:control_policy}
    The optimal solution for the exact covariance steering (equality constrained) version of Problem~\eqref{prob:fi_cs} is given by
    \begin{align} \label{eq:adaptive_state_feedback}
        \pi^{\ast}(x_k) = \rho^{\ast}_k(x_k, p_{\mathrm{gt}}) = \tilde{v}^{\ast}_k(p_{\mathrm{gt}}) + \tilde{L}^{\ast}_k(p_{\mathrm{gt}}) x_k,
    \end{align}
    where $\tilde{v}^{\ast}_k(\cdot) : \Re^{n_p} \rightarrow \Re^{n_u}$ and $\tilde{L}^{\ast}_k(\cdot) : \Re^{n_p} \rightarrow \Re^{n_u \times n_x}$ for $k = 0, 1, \dots, N-1$, and where
    $\tilde{v}^{\ast}_k(\cdot)$ and $\tilde{L}^{\ast}_k(\cdot)$ are given in Theorem~3 of \cite{liu2022optimal}.
    Furthermore,
    the result also holds for Problem~\eqref{prob:ce_cs}, using $\bar{p}$ in place of $p_{\mathrm{gt}}$.
\end{lemma}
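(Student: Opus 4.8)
The plan is to observe that, once the realization $p_{\mathrm{gt}}$ is fixed, the equality-constrained version of Problem~\eqref{prob:fi_cs} is nothing but a covariance steering problem for a linear time-varying system driven purely by additive noise, for which the claimed affine-feedback form is exactly the content of Theorem~3 of~\cite{liu2022optimal}; the proof therefore amounts to checking that the hypotheses of that theorem hold in our setting and then transcribing its conclusion, tracking the dependence on the parameter.

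First I would fix $p = p_{\mathrm{gt}}$. Then \eqref{eq:lpv_sys} reads $x_{k+1} = A_k(p_{\mathrm{gt}}) x_k + B_k(p_{\mathrm{gt}}) u_k + D_k w_k + r_k(p_{\mathrm{gt}})$, with the matrices $A_k(p_{\mathrm{gt}})$, $B_k(p_{\mathrm{gt}})$ and the vector $r_k(p_{\mathrm{gt}})$ now deterministic and known (they are the polynomial evaluations defined in Section~II), and with $w_k$ having zero mean, identity covariance, and no time correlation, which is exactly the disturbance model assumed in~\cite{liu2022optimal}. The stage cost $\ell_k$ is quadratic with $Q_k \succeq \bm{0}_{n_x\times n_x}$, $R_k \succ \bm{0}_{n_u \times n_u}$, and the terminal specification is the exact steering pair $(\mu_F,\Sigma_F)$ with $\Sigma_F \succ \bm{0}_{n_x \times n_x}$; these are precisely the standing assumptions under which Theorem~3 of~\cite{liu2022optimal} applies. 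Invoking that theorem yields that the optimum is attained by the affine state-feedback law $u_k = \tilde{v}^{\ast}_k + \tilde{L}^{\ast}_k x_k$, with $\tilde{v}^{\ast}_k$ and $\tilde{L}^{\ast}_k$ given there in closed form through the solution of the associated (convexified) program. Finally, since every coefficient entering those closed-form expressions is assembled from $A_k(\cdot)$, $B_k(\cdot)$, $r_k(\cdot)$, each of which is a fixed polynomial in $p$, the gains are well-defined functions $\tilde{v}^{\ast}_k(\cdot):\Re^{n_p}\to\Re^{n_u}$ and $\tilde{L}^{\ast}_k(\cdot):\Re^{n_p}\to\Re^{n_u\times n_x}$ of the parameter, which is \eqref{eq:adaptive_state_feedback}.

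The step I expect to be the crux --- and the one actually carried out in~\cite{liu2022optimal} rather than repeated here --- is establishing that restricting to affine state feedback incurs no loss of optimality against arbitrary, possibly nonlinear and history-dependent, admissible policies. The argument rests on the observation that $J_N$ and the terminal equality constraints depend on the trajectory only through the first two moments of $(x_k,u_k)$, so the problem is equivalent to steering this moment sequence; one then shows, via the affine disturbance-feedback reparameterization --- which renders the moment dynamics and the steering constraints jointly convex and is in bijection with affine state feedback whenever the closed-loop state covariances are nonsingular --- that the set of moment trajectories reachable by affine feedback contains an optimizer, and recovers $\tilde{v}^{\ast}_k,\tilde{L}^{\ast}_k$ from the corresponding optimality (KKT) conditions. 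For the certainty-equivalence claim I would simply note that Problem~\eqref{prob:ce_cs} is obtained from the equality-constrained Problem~\eqref{prob:fi_cs} by the substitution $p_{\mathrm{gt}}\mapsto\bar{p}$ together with the removal of the (now absent) expectation over $p$; the entire argument above then applies verbatim with $\bar{p}$ in place of $p_{\mathrm{gt}}$, giving the optimal policy $\rho^{\ast}_k(x_k,\bar{p}) = \tilde{v}^{\ast}_k(\bar{p}) + \tilde{L}^{\ast}_k(\bar{p}) x_k$.
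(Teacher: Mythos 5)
Your proposal is correct and follows essentially the same route as the paper: both reduce the statement to a direct invocation of Theorem~3 of~\cite{liu2022optimal} once $p_{\mathrm{gt}}$ (or $\bar{p}$) is fixed, and both obtain the parameter dependence of $\tilde{v}^{\ast}_k(\cdot)$ and $\tilde{L}^{\ast}_k(\cdot)$ from the fact that the cited closed-form gains are built from the parameter-dependent system data. The only detail the paper adds is the bookkeeping step of converting the reference's policy form $u_k = K_k(x_k-\mu_k) + z_k$ into the stated form by absorbing the deterministic mean sequence, i.e.\ $\tilde{L}^{\ast}_k = K_k$ and $\tilde{v}^{\ast}_k = z_k - K_k\mu_k$, which your transcription omits but which does not affect correctness.
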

\begin{proof}
    The result follows from Theorem~3~of~\cite{liu2022optimal}, with the following considerations.
    While \cite{liu2022optimal} presents the optimal policy as $u_k = K_k(x_k - \mu_k) + z_k$, one may observe that the mean sequence is deterministic and thus it may be incorporated into the feed-forward gain, to arrive at $\tilde{L}^{\ast}_k(\cdot) = K_k$ and $\tilde{v}^{\ast}_k(\cdot) = z_k - K_k \mu_k$.
    Additionally, observe the expressions for the optimal values of $z_k$ and $K_k$ given in Theorem~3~of~\cite{liu2022optimal} depend on the initial moments as well as the system matrices, which in-turn, depend on the system parameter realization, hence the mappings $\tilde{v}^{\ast}_k(p_{\mathrm{gt}})$ and $\tilde{L}^{\ast}_k(p_{\mathrm{gt}})$. 
\end{proof}

Thus, as Lemma~\ref{lem:control_policy} illustrates, the optimal control policy depends on the system parameter realization which is unknown a-priori at the time the control policy is designed in the cases of Problems~\eqref{prob:r_cs}~and~\eqref{prob:d_cs}. 
Therefore, the parameterization proposed in~\eqref{eq:adaptive_control_policy} and~\eqref{eq:dual_control_policy} is reasonable.
As optimizing over functions is intractable, we replace~\eqref{eq:dual_control_policy} with a first-order approximation of the optimal policy presented in Lemma~\ref{lem:control_policy}, given by 
    \begin{align} \label{eq:adaptive_affine_state_feedback}
        \rho_k(x_k, \hat{p}_k) &= v_k^0 + L_k^{0} x_k + \sum_{j=1}^{n_p} v_k^{j} \hat{p}_k^j + L_k^j \hat{p}_k^j x_k,
    \end{align}
    where $v_k^j \in \Re^{n_u}$ and $L_k^{j} \in \Re^{n_u \times n_x}$ for $k = 0, 1, \dots, N-1$ and $j = 0, 1, \dots, n_p$.
    
Thus, we have an alternative to Problem~\eqref{prob:d_cs}, given by 
\begin{subequations} \label{prob:d_cs_recursive}
    \begin{align}
        & \min_{\bm{v}^{N}, \bm{L}^{N}} J_N(\bm{v}^{N}, \bm{L}^{N}) = \Expectation\left[\sum_{k=0}^{N-1} \ell_k(x_k, u_k) \right], \\
        &\text{subject to} \nonumber\\
        & x_{0} \sim \mathcal{X}_{0}, \;\; w_k \sim \mathcal{W}, \;\; p \sim \mathcal{P}, \;\; \hat{p}_0 = \bar{p}, \;\; P_0 = P, \\
        & x_{k+1} = A_k({p}) x_k + B_k({p}) u_k + D_k w_k + r_k({p}), \label{const:dr_dynamics}\\
        & u_k = v_k^0 + L_k^{0} x_k + \sum_{j=1}^{n_p} v_k^{j} \hat{p}_k^j + L_k^j \hat{p}_k^j x_k, \\
        &\hat{p}_{k+1} = \hat{p}_k + P_{k+1} \Gamma_{k}(x_k, u_k)^{\top} (x_{k+1} - \Gamma_{k}(x_k, u_k) \hat{p}_{k}), \label{const:rls_mean} \\
        &P_{k+1} = \frac{1}{\gamma} P_{k} - \frac{1}{\gamma}P_{k} \Gamma_{k}(x_k, u_k)^{\top} \nonumber\\
        &\;\; (\gamma I + \Gamma_{k}(x_k, u_k) P_{k} \Gamma_{k}(x_k, u_k)^{\top})^{-1} \Gamma_{k}(x_k, u_k) P_{k}, \label{const:rls_cov}\\
        & \Expectation[x_N] = \mu_F, \quad \Expectation[(x_N - \mu_F)(x_N - \mu_F)^\top] \preceq \Sigma_F,
    \end{align}
\end{subequations}
for $k = 0, 1, \dots, N-1$, and where $\bm{v}^{N} = \{v^{j}_{k}\}_{k=0, j=0}^{N-1, n_p}$ and $\bm{L}^{N} = \{L_{k}^{j}\}_{k=0, j=0}^{N-1, n_p}$.
Problem~\eqref{prob:d_cs_recursive} replaces the batch least squares estimation defined in \eqref{const:lse} with the recursive least squares estimation \eqref{const:rls_mean}-\eqref{const:rls_cov}, and Problem~\eqref{prob:d_cs_recursive} parameterizes the arbitrary feedback policies $\bm{\rho}^N$ in terms of $\bm{v}^N$ and $\bm{L}^N$.

\subsection{Sample Average Approximation}

As discussed in \cite{knaup2023covariance}, analytically propagating the state uncertainty through \eqref{const:dr_dynamics} is challenging due to the dependence between $x_k$ and $p$ for all $k > 0$. 
In \cite{knaup2023covariance}, we derived analytical expressions for the propagation of the moments for a specific class of problems subject to constant parametric uncertainty, similar to Problem~\eqref{prob:r_cs}. 
In our current work, the moment dynamics are further complicated by the online estimation of the parameters and the feedback in the control policy on the parameter estimates.
As a result, in this work, we take an alternative approach and utilize a Monte Carlo sampling-based approximation to propagate the uncertainty. 

The approximation of Problem~\eqref{prob:d_cs_recursive} is given by
\begin{subequations} \label{prob:d_cs_mc}
    \begin{align}
        & \min_{\bm{v}^{N}, \bm{L}^{N}, V} J_N^M(\bm{v}^{N}, \bm{L}^{N}) = \sum_{i=1}^{M} \sum_{k=0}^{N-1} \ell_k(x_k^{i}, u_k^{i}), \label{obj:mc}\\
        &\text{subject to} \nonumber\\
        & x_0^{i} \sim \mathcal{X}_0, \;\; w_k^{i} \sim \mathcal{W}, \;\; p^{i} \sim \mathcal{P}, \;\; \hat{p}_0^{i} = \bar{p}, \;\; P_0^{i} = P, \label{const:mc_init}\\
        & x_{k+1}^{i} = A_k(p^i) x_k^{i} + B_k({p}^{i}) u_k^{i} + D_k w_k^{i} + r_k(p^{i}), \label{const:mc_dynamics}\\
        &  u_k^{i} = v_k^0 + L_k^{0} x_k^{i} + \sum_{j=1}^{n_p} v_k^{j} \hat{p}_k^{i, j} + L_k^j \hat{p}_k^{i, j} x_k^{i}, \label{const:mc_control}\\
        &\hat{p}_{k+1}^{i} = \hat{p}_k^{i} + P_{k+1}^{i} \Gamma_{k}(x_k^{i}, u_k^{i})^{\top} (x_{k+1}^{i} - \Gamma_{k}(x_k^{i}, u_k^{i}) \hat{p}_{k}^{i}), \label{const:mc_p_hat} \\
        &P_{k+1}^{i} = P_{k}^{i} - P_{k}^{i} \Gamma_{k}(x_k^{i}, u_k^{i})^{\top} \nonumber\\
        &\quad (I + \Gamma_{k}(x_k^{i}, u_k^{i}) P_{k}^{i} \Gamma_{k}(x_k^{i}, u_k^{i})^{\top})^{-1} \Gamma_{k}(x_k^{i}, u_k^{i}) P_{k}^{i}, \label{const:mc_p_cov} \\
        & |\sum_{i=1}^{M} x_N^{i} / M - \mu_F| \leq \Delta_{\mu}, \label{const:mc_mean}\\
        &\hspace{-1mm} |\mathtt{vec}(\sum_{i=1}^{M}  (x_N^{i} - \mu_F)(\star)^\top / M + V V^T - \Sigma_F)| \leq \Delta_{\Sigma}, \label{const:mc_cov}
    \end{align}
\end{subequations}
for $i = 1, \dots, M$, $k = 0, 1, \dots, N-1$, 
and where $(\star)$ represents repeated terms, 
$\hat{p}_{k}^{i} = [\hat{p}_{k}^{i, 1}, \dots, \hat{p}_{k}^{i, n_p} ]^\top$, $\hat{p}_{k}^{i, j} \in \Re$ for $j = 1, \dots, n_p$,
$\mathtt{vec}(\cdot)$ is the vector (flattening) operator, 
$V \in \Re^{n_x \times n_x}$,
and $\Delta_{\mu} \in \Re^{n_x}$ and $\Delta_{\Sigma} \in \Re^{n_x n_x}$ are small quantities.
For details on $V$, $\Delta_{\mu}$, and $\Delta_{\Sigma}$, refer to Remark~\ref{rem:nlp}.
Problem~\eqref{prob:d_cs_mc} approximates Problem~\eqref{prob:d_cs_recursive} using $M$ Monte Carlo trials.
The use of Monte Carlo sampling in this context is referred to as the sample average approximation (SAA), a well-established technique in the field of stochastic optimization~\cite{shapiro2021lectures}. 
For results on the convergence and feasibility of SAA, see, for example,~\cite{shapiro2021lectures, lew2022sample}.

\begin{remark} \label{rem:nlp}
    As most nonlinear programming (NLP) solvers do not support semidefinite constraints, we add the slack variable $V$ to convert the semidefinite-constrained NLP \eqref{prob:d_cs_recursive} to an equality-constrained NLP, which is then relaxed to the inequality-constrained NLP given by \eqref{prob:d_cs_mc}. 
    Note that $V V^\top \succeq \bm{0}_{n_x \times n_x}$ for all $V \in \Re^{n_x \times n_x}$, thus the relaxation using the slack variable $V$ is lossless.
    As recommended in \cite{lew2022sample}, we relax the equality constraints to inequalities allowing for a small margin of error, given by $\Delta_{\mu}$ and $\Delta_{\Sigma}$, dependent on the sample size.
\end{remark}
%



We are now ready to introduce the following theorem concerning the proposed adaptive dual covariance steering method.

\begin{theorem}\label{thm:mc_dual}
    Problem~\eqref{prob:d_cs_mc} preserves the dual control effect given in Definition~\ref{def:dual}. 
    That is, the role of the control actions in affecting the parameter estimate is accounted for in the problem formulation, and the control policy adapts to the estimated parameters so that the following dual control properties are present.
    \begin{enumerate}[(i)]
        \item The control policy minimizes the performance criteria and satisfies the terminal constraints for $p \sim \mathcal{P}$.
        \item The control policy affects the parameter uncertainty which is exploited to adapt the control policy using the improved parameter estimate at later time steps.
    \end{enumerate}
\end{theorem}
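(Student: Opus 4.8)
The plan is to treat Theorem~\ref{thm:mc_dual} as a verification statement: rather than deriving a closed-form optimum, I would inspect the objective, the terminal constraints, and the recursive-least-squares constraints of Problem~\eqref{prob:d_cs_mc} and exhibit, term by term, the structural couplings that realize the two properties of Definition~\ref{def:dual}. The argument therefore splits along properties (i) and (ii).

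For property (i), I would first observe that the objective \eqref{obj:mc} is the sample-average approximation of the expected cost $\Expectation[\sum_k \ell_k(x_k,u_k)]$ of Problem~\eqref{prob:d_cs_recursive}, and that the samples are drawn with $p^i \sim \mathcal{P}$ in \eqref{const:mc_init}--\eqref{const:mc_dynamics} rather than fixed at $\bar{p}$ as in the certainty-equivalence Problem~\eqref{prob:ce_cs}. By the standard consistency and feasibility results for the SAA cited in \cite{shapiro2021lectures, lew2022sample}, as $M \to \infty$ the optimal value and minimizers of \eqref{prob:d_cs_mc} converge to those of \eqref{prob:d_cs_recursive}, and any solution of \eqref{prob:d_cs_mc} satisfies $\Expectation[x_N] = \mu_F$ and $\Expectation[(x_N-\mu_F)(x_N-\mu_F)^\top] \preceq \Sigma_F$ under $p\sim\mathcal{P}$ up to the margins $\Delta_\mu,\Delta_\Sigma$ of Remark~\ref{rem:nlp}; since the slack $V$ makes the covariance relaxation lossless, \eqref{const:mc_mean}--\eqref{const:mc_cov} are exactly the sampled counterparts of the moment constraints for the true parameter distribution, which establishes (i).

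For property (ii), I would argue in two steps. First, the control shapes the parameter-estimate uncertainty: the regressor $\Gamma_k(x_k^i,u_k^i)$ appearing in the RLS covariance recursion \eqref{const:mc_p_cov} (cf.\ Lemma~\ref{lem:rls}) depends explicitly on $u_k^i$, and on $x_k^i$, which in turn depends on the earlier controls through \eqref{const:mc_dynamics}; hence $P_{k+1}^i$ is a function of $u_0^i,\dots,u_k^i$, so the gains $\{v_k^j,L_k^j\}$ determine the distribution of the future estimation-error covariance, i.e.\ the probing effect of the control is represented in the formulation rather than a passive by-product. Second, the control adapts to the estimate: the parameterization \eqref{const:mc_control} contains the terms $v_k^j \hat{p}_k^{i,j} + L_k^j \hat{p}_k^{i,j} x_k^i$, a first-order truncation via Lemma~\ref{lem:control_policy} of the parameter-dependent optimal covariance-steering policy, so $u_k^i$ is an explicit function of the running estimate $\hat{p}_k^i$; as \eqref{const:mc_p_hat} refines $\hat{p}_k^i$ toward the realized $p^i$, the applied control moves toward the policy optimal for that realization. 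Finally, because $\hat{p}_k^i$ and $P_k^i$ enter the constraints over the entire horizon while the cost \eqref{obj:mc} accumulates along the whole trajectory, minimizing over $\{v_k^j,L_k^j\}$ trades the immediate control cost of exciting the system against the later reduction of cost and terminal-constraint violation afforded by a sharper estimate, which is exactly the trade-off of Definition~\ref{def:dual}(ii).

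The main obstacle is not any single calculation but making ``the dual effect is present'' precise, since one must exclude the degenerate case in which excitation buys nothing (as for classical LQG with additive noise and no parametric uncertainty, where certainty equivalence is optimal \cite{bar1974dual}). I would resolve this by noting that here the estimation-error covariance $P_{k+1}^i$ genuinely depends on $u_k^i$ through $\Gamma_k(\cdot,\cdot)$, so the Bar-Shalom--Tse neutrality condition fails, and the adaptive gains $\{v_k^j,L_k^j\}$ with $j\ge 1$ give the policy a channel through which a better estimate strictly lowers the objective; hence for generic problem data the optimizer of \eqref{prob:d_cs_mc} is strictly dual, and in every case the formulation \emph{accounts for} the effect, which is what the theorem claims.
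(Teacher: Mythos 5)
Your proposal is correct and follows essentially the same verification-by-inspection argument as the paper: property (i) from the fact that the sampled dynamics \eqref{const:mc_dynamics} and the cost/terminal constraints \eqref{obj:mc}, \eqref{const:mc_mean}--\eqref{const:mc_cov} are imposed on trajectories driven by $p^i \sim \mathcal{P}$ (with the estimate entering only through the feedback), and property (ii) from the dependence of \eqref{const:mc_p_hat}--\eqref{const:mc_p_cov} on past controls together with the exploitation of the improved estimate through \eqref{const:mc_control}. Your added remarks on SAA convergence and the Bar-Shalom--Tse neutrality condition go beyond what the paper argues but do not change the essential route.
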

\begin{proof}
    Satisfaction of property (i) may be seen from the dynamics \eqref{const:mc_dynamics}, in which, the future empirical state distributions $\{x_{k}^{i}\}_{k=0, i=1}^{N, M}$ depend on the parameter realizations $\{p^{i}\}_{i=1}^{M}$ sampled from the distribution $\mathcal{P}$. 
    The cost and constraint equations~\eqref{obj:mc}, \eqref{const:mc_mean}, \eqref{const:mc_cov}, then enforce the objective function and terminal constraints on the empirical state distributions $\{x_{k}^{i}\}_{k=0, i=1}^{N, M}$.
    Note that the parameter estimate only enters into the problem as a source of feedback through the control policy.
    This is in contrast to methods that enforce the constraints using the estimated parameter and which may not provide the desired performance in practice. 

    Satisfaction of property (ii) may be seen from the effect of the control on the parameter estimate in \eqref{const:mc_p_hat}-\eqref{const:mc_p_cov}.
    Specifically, note that, as per \eqref{const:mc_init}, $p_0^i = \bar{p}$ and $P_0^i = P$, for all $i=1, \dots, M$, but for $k > 0$, $p_k^i$ and $P_k^i$ depend on the previous controls $u_t^i$ where $0 \leq t < k$. 
    Therefore, the control policy may improve the parameter estimate and reduce the corresponding uncertainty. 
    The improvement in the parameter estimate at later time steps is then exploited through the control policy \eqref{const:mc_control}, which allows for more useful adaptation as $\hat{p}_k^i$ approaches the true realization of $p^i$.
\end{proof}

\section{Numerical Example}

We compare the proposed approach \eqref{prob:d_cs_mc} against the non-adaptive method given in \cite{knaup2023covariance} in an example of controlling a vehicle with uncertainty in the steering column dynamics. 
The equations of motion for the nonlinear kinematic bicycle model, shown in Fig.~\ref{fig:bike}, are given by the following 
\begin{subequations}
    \begin{align}
        \dot{e}_{\psi} &= \dot{\psi} - \dot{\psi}_{\mathrm{ref}} + \dot{\nu}_y / \nu_x, \\
        \dot{e}_y &= \nu_y \cos{e_{\psi}} + \nu_x \sin{e_{\psi}}, \\
        \dot{s} &= (\nu_x \cos{e_{\psi}} - \nu_y \sin{e_{\psi}}) / (1 - e_{y} \sigma),
    \end{align}
\end{subequations}
where
$\nu_y = \nu_x \delta \frac{\ell_r}{\ell_f + \ell_r}$,
$\dot{\psi} = \tan{\delta} \frac{\nu_x}{\ell_f + \ell_r}$,
and where
$e_{\psi}$ is the heading error with respect to the reference path,
$\psi$ is the vehicle's absolute heading,
$\psi_{\mathrm{ref}}$ is the heading of the reference path,
$\nu_y$ is the vehicle's lateral velocity,
$\nu_x$ is the vehicle's longitudinal velocity, 
$e_y$ is the vehicle's lateral error with respect to the reference path,
$s$ is the vehicle's longitudinal position with respect to the reference path,
$\sigma$ is the curvature of the reference path,
$\delta$ is the front steering angle,
and $\ell_f$ and $\ell_r$ are the locations of the vehicle's center of mass.
\begin{figure}[ht]
    \centering
    \includegraphics[width=0.85\linewidth]{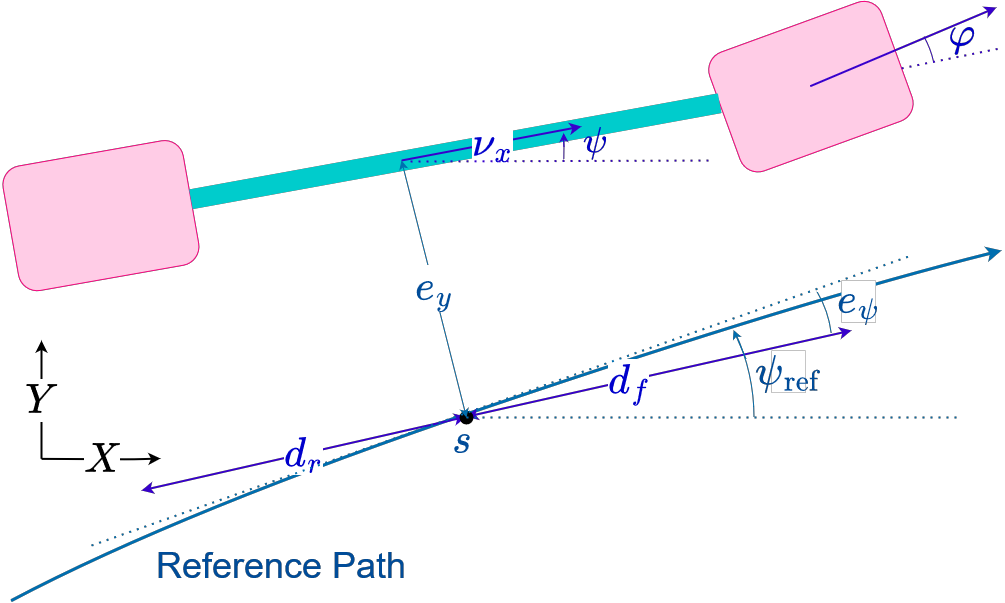}
    \caption{Kinematic bicycle model in curvilinear coordinates.}
    \label{fig:bike}
\end{figure}
Given a constant velocity $\nu_x$ and curvature $\sigma$, and assuming $\delta$ and $e_{\psi}$ remain small, a linear approximation for the lateral motion is given by
\begin{subequations}
    \begin{align}
        \dot{e}_{\psi} &= \frac{\nu_x}{\ell_f + \ell_r} \delta - \nu_x \sigma + \dot{\delta} \frac{\ell_r}{\ell_f + \ell_r}, \\
        \dot{e}_{y} &= \frac{\ell_r}{\ell_f + \ell_r} \nu_x \delta + \nu_x e_{\psi}.
    \end{align}
\end{subequations}
We define the state $x = [\delta, e_{\psi}, e_y]^\top$ and the control $u = \dot{\delta} / p_{\dot{\delta}}$, and use Euler integration with a time-step of $\Delta t = 0.2$ sec to obtain the affine system
\begin{subequations}
    \begin{align}
        x_{k+1} &= A x_{k} + B(p_{\dot{\delta}}) u_k + D w_k + r,
    \end{align}
    where, 
    \begin{align}
        A &= \begin{bmatrix}
            1 & 0 & 0 \\
            \frac{\nu_{x}} {\ell_f + \ell_r} \Delta t  & 1 & 0 \\
            \frac{\ell_r} {\ell_f + \ell_r} \nu_{x} \Delta t  & \nu_{x} \Delta t  & 1
        \end{bmatrix}, 
        ~~
        B(p_{\dot{\delta}}) = \begin{bmatrix}
            p_{\dot{\delta}} \Delta t \\
            p_{\dot{\delta}} \frac{\ell_r} {\ell_f + \ell_r} \Delta t \\
            0
        \end{bmatrix},
        \nonumber\\
        D &= \begin{bmatrix}
            \theta_{\delta} \Delta t & 0 & 0 \\
            0 & \theta_{\psi} \Delta t & 0 \\
            0 & 0 & \theta_{y} \Delta t
        \end{bmatrix},
        \quad
        r = \begin{bmatrix}
            0 \\
            - \sigma \nu_{x} \Delta t \\
            0 \\
        \end{bmatrix},
    \end{align}
\end{subequations}
where we have added the additive noise term to account for errors owing to the linear approximation.


We compared the proposed approach \eqref{prob:d_cs_mc} with the method presented in \cite{knaup2023covariance}, which solves Problem~\eqref{prob:r_cs} using policy parameterization~\eqref{eq:affine_state_feedback}, as a baseline.
We used CasADi \cite{andersson2019casadi} and IPOPT \cite{wachter2006implementation} to formulate and solve the NLP given by Problem~\eqref{prob:d_cs_mc}.
The two methods are evaluated with additive Gaussian disturbances and with parameters drawn from the following distributions shown in Fig.~\ref{fig:dists}: Gaussian, uniform, Beta, and a mixture of two Gaussians. 
Table~\ref{tab:vehicle_trials} presents a comparison of the proposed approach (Adaptive Dual) vs. the baseline method \cite{knaup2023covariance} (Static Robust) for different parameter distributions and terminal covariance constraints.
The performance is measured as the average cost over 1,000 Monte Carlo trials and normalized using the average cost of the proposed method for the given problem data.
The terminal covariance is given by $\Sigma_F = \diag(0.01, \Sigma_F^{\theta}, \Sigma_F^{\theta})$.

\begin{figure}[h]
    \centering
    \includegraphics[width=\columnwidth]{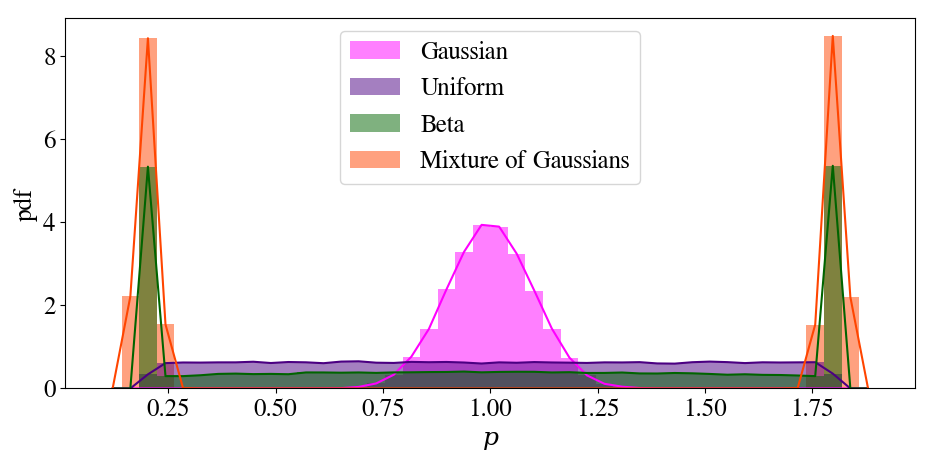}
    \caption{Parameter distributions.}
    \label{fig:dists}
\end{figure}

\begin{table}[h]
    \centering
    \caption{Average Cost Comparison for Various Parameter Distributions and Terminal Constraints}
    \label{tab:vehicle_trials}
    \begin{tabular}{|c||c|c|} \hline
         Control Formulation & Static Robust & Adaptive Dual \\ \hline
         Gaussian & & \\
         $\Sigma_F^{\theta} = 10^{-4}$ & $1.03$ & $1.0$ \\\hline
         Uniform & & \\
         $\Sigma_F^{\theta} = 10^{-3}$ & $1.41$ & $1.0$ \\\hline
         Uniform & & \\
         $\Sigma_F^{\theta} = 10^{-4}$ & $\infty$ & $1.0$ \\\hline
         Beta & & \\
         $\Sigma_F^{\theta} = 10^{-3}$ & $1.62$ & $1.0$ \\ \hline
         Beta & & \\
         $\Sigma_F^{\theta} = 10^{-4}$ & $\infty$ & $1.0$ \\ \hline
         Gaussian Mixture & & \\
         $\Sigma_F^{\theta} = 10^{-4}$ & $1.09$ & $1.0$ \\ \hline
    \end{tabular}
\end{table}

It may be seen in Table~\ref{tab:vehicle_trials} that the proposed adaptive dual covariance steering method is able to steer the state distribution to the prescribed terminal constraints while incurring a significantly lower average cost than the baseline static robust covariance steering method for the uniform and Beta distributions when $\Sigma_F^{\theta} = 10^{-3}$.
Moreover, when $\Sigma_F^{\theta} = 10^{-4}$, the proposed method is still able to meet the constraint using adaptive control, whereas the baseline static method is not able to find a solution that will satisfy the terminal constraint for the uniform and Beta distributions.
For the Gaussian distribution and mixture of Gaussians, the proposed adaptive dual covariance steering method still incurs a lower cost, but the benefits are less significant, and both the proposed and baseline methods are able to meet the terminal constraint when $\Sigma_F^{\theta} = 10^{-4}$.
These results may be interpreted through the entropy of the distributions shown in Fig.~\ref{fig:dists}. 
The Gaussian and bimodal mixture of Gaussians have less entropy and the parameter value is likely to lie around one or two values, respectively.
Thus, the static robust method is able to use the statistical information about these distributions to design a static control policy which statistically performs well for most parameter realizations.
However, the uniform and beta distributions have higher entropy, causing the advantages of an adaptive control policy to become more apparent.

Fig.~\ref{fig:trajectories_gaussian} shows sampled realized trajectories when a control policy generated by solving Problem~\eqref{prob:d_cs_mc} using the proposed method (blue) is compared with a baseline static control policy (red). 
The baseline static control policy is designed to robustly minimize the cost and satisfy the terminal constraints for parameters sampled from the known prior distribution.
The proposed method, on the other hand, also guarantees robust satisfaction of the terminal constraints, but additionally adapts the pre-computed control policy online in order to adjust to the changing parameter estimate and achieve a lower cost.
\begin{figure}[h]
    \centering
    \includegraphics[width=\columnwidth]{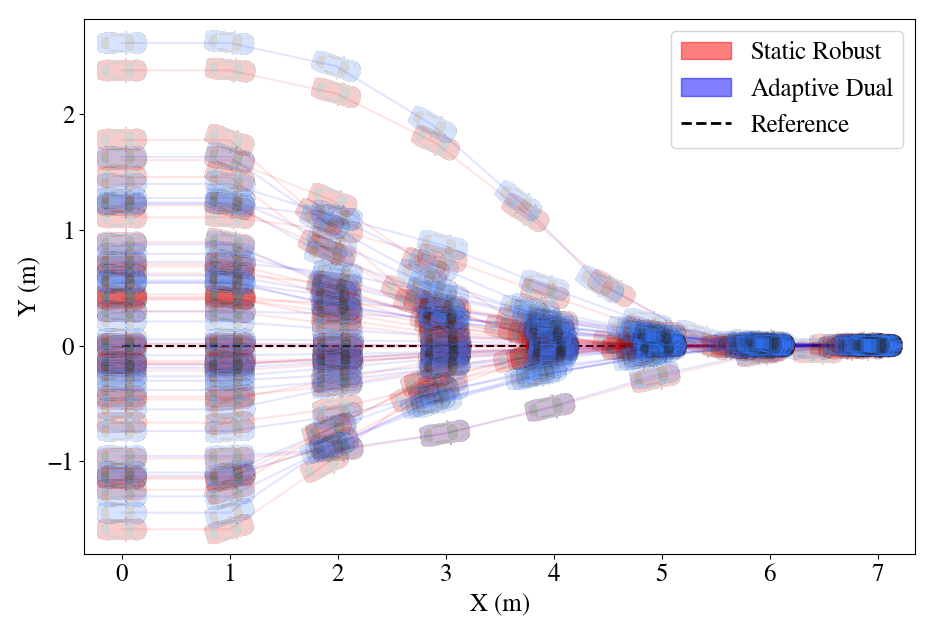}
    \caption{Sampled vehicle trajectories.}
    \label{fig:trajectories_gaussian}
\end{figure}
\begin{figure}[h]
    \centering
    \includegraphics[width=\columnwidth]{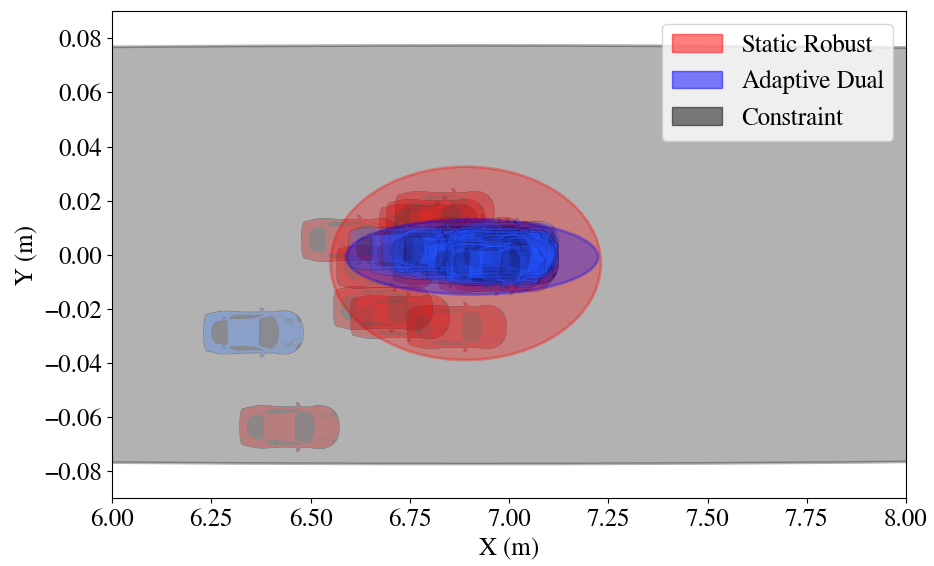}
    \caption{Terminal constraint visualization.}
    \label{fig:terminal_gaussian}
\end{figure}
%
Although the trajectories for both vehicles in Fig.~\ref{fig:trajectories_gaussian} appear similar,
as seen in Fig.~\ref{fig:terminal_gaussian}, the proposed adaptive dual covariance steering is able to successfully achieve smaller terminal covariance targets than the baseline static robust covariance steering by adapting the control policy to the parameter estimate online and, thus, achieves a lower average cost.


\section{Conclusion}

In this paper, we formulated a dual control variation of the covariance steering problem for systems subject to unknown parameters. 
The adaptive dual covariance steering problem includes online parameter identification and a feedback policy parameterization which enables online adaptation to the estimated parameters while still ensuring constraint satisfaction in practice. 
We showed that the proposed method preserves the dual control property of encoding the control policy's effect on reducing model uncertainty through the parameter estimation.
We applied the proposed adaptive dual covariance steering approach to a vehicle control example and demonstrated it is able to outperform a baseline covariance steering method for systems subject to parametric uncertainty but which does not include adaptive dual control.

\bibliographystyle{IEEEtran}
\bibliography{references}

\end{document}